\documentclass{article}
\usepackage{PRIMEarxiv}
\usepackage[T1]{fontenc}
\usepackage[utf8]{inputenc}
\usepackage{amsmath,amssymb,amsthm}
\usepackage{booktabs,array}
\usepackage[hidelinks]{hyperref}
\usepackage{microtype}
\usepackage{url}
\usepackage{fancyhdr}
\usepackage{graphicx}
\graphicspath{{media/}}

\renewcommand{\keywordname}{\textbf{\textit{Keywords}}}
\hypersetup{
  pdftitle={When Information Is Not Enough: Accuracy-Constrained Thermodynamic Costs of Binary Classification},
  pdfsubject={Finite-sample binary inference and bounded-rate memory writing},
  pdfkeywords={rate-distortion, binary inference, stochastic thermodynamics, optimal transport, finite-time control}
}
\newtheorem{theorem}{Theorem}
\newtheorem{proposition}[theorem]{Proposition}
\newtheorem{corollary}[theorem]{Corollary}
\theoremstyle{definition}

\theoremstyle{remark}
\newtheorem{remark}[theorem]{Remark}
\newcommand{\E}{\mathbb{E}}
\newcommand{\Prob}{\mathbb{P}}

\newcommand{\Info}{\mathrm{info}}
\newcommand{\Task}{\mathrm{task}}
\newcommand{\dd}{\mathrm{d}}
\newcommand{\hbin}{h_2}
\newcommand{\W}{\mathcal{W}_1}

\newcommand{\B}{\mathsf{B}}
\newcommand{\MI}{\mathcal I}
\title{When Information Is Not Enough: Accuracy-Constrained Thermodynamic Costs \\ of Binary Classification}
\author{Xuening Wu\\Fudan University\\\texttt{xuening.wu@gmail.com}}
\date{Working manuscript, version 0.3}
\begin{document}
\maketitle
\thispagestyle{empty}
\begin{abstract}
How much entropy must a physical classifier produce to achieve a prescribed accuracy? Rate--distortion theory specifies the minimum information required, but does that information threshold suffice to determine the physical cost? We show that it does not, even for a binary task and a two-state memory.
For a uniform binary target observed through a finite symmetric experiment, replacing the classification-error constraint with its necessary mutual-information threshold strictly lowers the infimum of entropy production under a common operation time, integrated mobility budget, and sufficiently large finite transition-rate cap. The separation holds whenever the target error lies strictly between the Bayes error of the observations and chance.
Two results establish this physical gap. First, ordering observations by posterior confidence gives exact transport--risk and transport--information frontiers. Observations with the same Bayes error can have different cost frontiers. Second, we construct bounded-rate protocols that realize prescribed encoders with write probabilities below one, starting from exact reset, with an explicit excess cost above the transport bound. An achievable information-constrained cost then falls below a lower bound valid for every task-feasible protocol.
Examples with repeated noisy observations illustrate the separation. The results identify a limitation of information-only benchmarks for physical classification: task accuracy and kinetic constraints must be retained explicitly. The cost analyzed is total entropy production during memory writing, excluding data acquisition, controller operation, and subsequent reset.
\end{abstract}

\keywords{Rate--distortion \and Binary inference \and Stochastic thermodynamics \and Optimal transport \and Finite-rate control}

\section{Introduction}
How much entropy must a physical classifier produce to achieve a prescribed accuracy? A classifier must extract useful information from its observations and instantiate that information in a physical memory. The statistical requirement depends on the prediction loss; the cost of writing the memory also depends on its reset state, transition structure, available time, and allowed controls. Connecting these requirements is necessary for assessing the physical cost of a decision.

Rate--distortion theory gives a natural starting point: achieving an error target requires retaining a minimum amount of information about the target~\cite{PolyanskiyWu2022Notes}. Rate--distortion methods also inform Bayesian learning theory~\cite{HafezKolahi2021MER}. Yet the information threshold is only a necessary condition for an individual representation to support the desired accuracy. The question addressed here is whether this distinction changes the minimum physical cost. If the error constraint is replaced by its information threshold, can the resulting optimization select a cheaper representation that fails the task? We compare these two constraints using information about the target itself, without introducing irrelevant variables.

Classification and structured decision systems motivate this distinction: returning a label or probability does not by itself specify the physical cost of meeting an accuracy target. Jev, described by TypeSafe as a typed probabilistic decision interface~\cite{Almeida2026Jev}, is one example of this application context; it is not modeled or evaluated here.

We prove that the information-only constraint can strictly underestimate the optimal entropy production, even for a uniform binary target and a two-state memory. For any informative finite symmetric observation experiment, the separation holds whenever the error target lies strictly above the data Bayes floor and below chance, provided the common transition-rate cap is sufficiently large and finite. Both optimization problems use the same reset state, operation time, and integrated mobility budget, and both admit finite-cost protocols. Thus the result compares feasible physical implementations under common resource constraints.

The proof resolves two linked problems. First, posterior-score ordering gives exact transport--risk and transport--information frontiers and establishes a strict gap between their optimal transport costs. These frontiers depend on the distribution of posterior confidence: one and two noisy observations can have the same Bayes error but different representation costs. Second, an explicit bounded-rate protocol realizes any encoder with positive transport cost and conditional write probabilities bounded away from one. Its excess entropy production above the transport bound is proportional to the inverse rate cap. An achievable information-constrained upper bound can therefore be placed strictly below the task-constrained lower bound (Theorem~\ref{thm:combined}). This construction is needed because a difference between lower bounds alone would not establish a difference between physical cost infima.

The analysis builds on discrete transport thermodynamics~\cite{VuSaito2023}, information-constrained measurement and feedback optimization~\cite{NagaseSagawa2024,NagaseSagawa2025Feedback}, and binary quantization~\cite{NguyenNguyen2020}. Its focus is the cost consequence of replacing task feasibility with a necessary information threshold under a specified finite-rate control model. The optimized cost is joint entropy production during memory writing; acquisition of observations, controller operation, retention, and subsequent reset are excluded. Section~\ref{sec:related} details the relation to prior work. Figure~\ref{fig:overview} summarizes the model and a concrete cost certificate.
\begin{figure}[htbp]
\centering
\includegraphics[width=\linewidth]{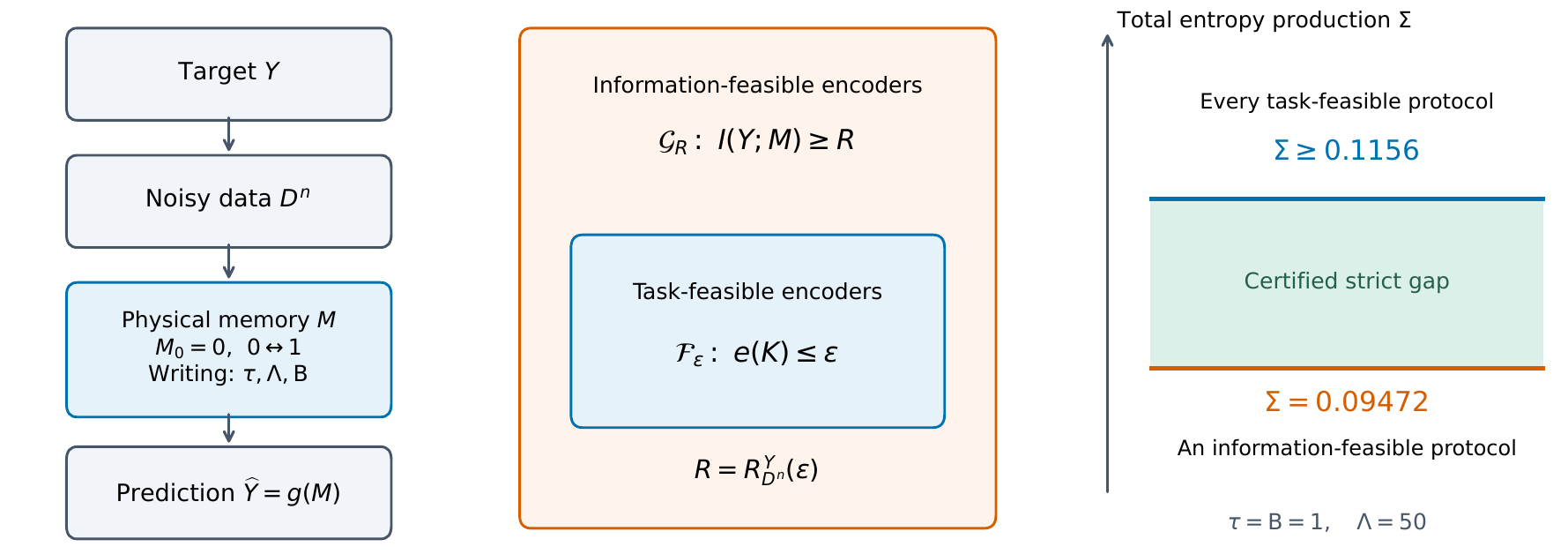}
\caption{Left: noisy observations are encoded into a reset binary memory before decoding. The memory-writing model fixes duration $\tau$, rate cap $\Lambda$, and integrated mobility budget $\B$. Middle: task-feasible encoders form a subset of those satisfying the necessary target-information threshold. This schematic inclusion alone establishes no strict physical-cost gap. Right: a certificate in the stated model, with three conditionally independent binary observations, each having crossover probability $0.2$, at target error $\epsilon=0.2$. At $\tau=\B=1$ and $\Lambda=50$, an explicit information-feasible protocol has total entropy production $0.0947168$, while every task-feasible protocol has cost at least $0.1156$. Task feasibility is established by a separate protocol. These are a feasible upper bound and a universal lower bound, respectively, not two computed physical optima; the displayed separation is specific to the proved conditions.}
\label{fig:overview}
\end{figure}

\section{Related work and attribution}
\label{sec:related}
\paragraph{Rate--distortion and inference.}
The binary rate--distortion formula and Fano argument are standard~\cite{PolyanskiyWu2022Notes}. Hafez-Kolahi et al.~\cite{HafezKolahi2021MER} formulate Bayesian minimum excess risk through constrained rate--distortion optimization. Our threshold $R_D^Y$ optimizes target information over data-mediated binary representations and is explicitly distinguished from $I(D;M)$. The comparison here concerns the physical-cost consequence of relaxing task feasibility to this threshold.

\paragraph{Indirect rate--distortion.}
Noisy source coding studies reconstruction of a latent source when the encoder observes only a corrupted version. Kostina and Verd\'u~\cite{KostinaVerdu2016} derive nonasymptotic bounds and dispersion results for this problem. Under expected distortion, the indirect problem uses the conditional loss $\widetilde\ell(d,\hat y)=\E[\ell(Y,\hat y)\mid D=d]$. Its usual single-letter compression objective is $I(D;\hat Y)$, whereas our relaxation constrains $I(Y;M)$ and our cost is physical transport or entropy production. Our sample size counts observations of one latent bit, rather than a block of source symbols being compressed. Thus the finite-observation calculation should not be interpreted as a new noisy source coding theorem.

\paragraph{Posterior thresholds and quantization.}
Threshold rules based on likelihood ratios are established tools for binary decisions and quantization. Nguyen and Nguyen~\cite{NguyenNguyen2020}, for example, study information-maximizing binary quantizers for noisy observations. Their constrained-output analysis~\cite{NguyenNguyen2020Constrained} is closer to the fixed-memory-mass problem here: it maximizes mutual information for a binary input and continuous observation under constraints on the quantized-output distribution, and establishes a convex-cell structure in posterior coordinates. We use finite observations, allow randomization at the threshold score, and give a direct score-ordering proof for the symmetric case. Our fixed-output-mass frontier follows a fractional-knapsack exchange argument. We claim neither a new threshold principle nor a new general theory of quantization. The symmetric finite-experiment setting permits the exact cost--risk and cost--information comparison used in the physical result.

\paragraph{Transport and physical measurement.}
Vu and Saito~\cite{VuSaito2023} establish the discrete transport relation underlying the entropy-production lower bound. Nagase and Sagawa~\cite{NagaseSagawa2024} optimize information gain under transport constraints and discuss approximately optimal pure-reset protocols. Their binary endpoint specializes to the information-optimal Z-channel in Section~\ref{sec:binary}. Their measurement framework also distinguishes joint entropy production from the memory-side cost containing an information term. Finite-time information-constrained feedback is addressed in Ref.~\cite{NagaseSagawa2025Feedback}.

\paragraph{Kinetic constraints and boundary reachability.}
Diana, Bagci, and Esposito~\cite{DianaBagciEsposito2013} study finite-time erasure with bounded fermionic transition rates. Their Eqs.~(8)--(9) give a minimum transition time and exponential bounds on reachable endpoint probabilities. Our bound $k_d\leq1-e^{-\Lambda\tau}$ uses the same comparison mechanism, applied to data-conditioned writing from reset. Their fixed-relaxation-rate erasure model and heat objective differ from our common rate cap and joint entropy-production objective with a mobility budget. Boundary inaccessibility is therefore an established kinetic effect; the role of our construction is to provide a finite-cap cost certificate for comparing two inference constraints.

The present finite-cap construction provides explicit rates, a common-cap condition, and an exact additive cost for a prescribed encoder with $0\leq k_d\leq\kappa<1$ at a fixed duration and subject to an upper bound on integrated mobility. It is then used to certify strict separation from the task-feasible cost. This comparison, rather than optimal transport or reset regularization in general, is the scope of the physical claim.

\section{Statistical and physical problem setup}
\label{sec:setup}
Let $Y\sim\operatorname{Bernoulli}(1/2)$ be a target and $D$ a finite observation with specified joint law $P_{Y,D}$. An encoder is a stochastic kernel $K(m\mid d)$ for a binary memory $M\in\{0,1\}$. A decoder $g$ predicts $Y$ from $M$. Its optimized Hamming risk is
\begin{equation}
 e(K)=\min_g\Prob(g(M)\neq Y),\qquad
 \mathcal F_\epsilon=\{K:e(K)\leq\epsilon\}.
\end{equation}
Randomized encoders are permitted; a deterministic Bayes decoder suffices. Define the constrained target-information threshold and its relaxation by
\begin{equation}
 R_D^Y(\epsilon)=\inf_{K\in\mathcal F_\epsilon}I(Y;M),\qquad
 \mathcal G_R=\{K:I(Y;M)\geq R\}.
 \label{eq:general-rate}
\end{equation}
For feasible targets, $\mathcal F_\epsilon\subseteq\mathcal G_{R_D^Y(\epsilon)}$. This inclusion gives a weak inequality between minimum costs for any common cost functional. Our question concerns strictness. The quantity $I(Y;M)$ describes information about the inference target; it is not the data-compression quantity $I(D;M)$. All encoders must factor through the observations.

Write $w_d=P(D=d)>0$ and $k_d=K(1\mid d)$. The memory starts exactly at $M_0=0$. Data are frozen during writing, and only unit-length edges $(d,0)\leftrightarrow(d,1)$ are allowed. Initial and final joint laws are
\begin{equation}
 p_0(d,m)=w_d\mathbf1_{\{m=0\}},\qquad p_K(d,m)=w_dK(m\mid d).
\end{equation}
Transport is restricted to each data sector. Moving its terminal mass $w_dk_d$ across one edge gives
\begin{equation}
 \W(p_0,p_K)=\sum_dw_dk_d=\Prob(M=1)=q(K).
 \label{eq:data-transport}
\end{equation}
The physical reset label remains fixed throughout the comparison. Complementing an encoder means preparing a different endpoint from that same reset state.

We first minimize this transport cost over the two endpoint constraint sets. We then minimize the total entropy production of continuous-time writing protocols under common duration $\tau$, integrated mobility budget $\B$, and transition-rate cap $\Lambda$. Section~\ref{sec:control} defines that model precisely. The controller may access observed $D$, but cannot access an unobserved $Y$. The costs of obtaining the data, computing or operating the controller, retaining the memory, decoding, and subsequent reset are outside the writing model.

All logarithms are natural, and $\hbin(x)=-x\ln x-(1-x)\ln(1-x)$, with continuous values at zero and one. Entropy production is measured in units of Boltzmann's constant.

\section{Exact boundaries for finite symmetric experiments}\label{sec:stat}
\subsection{Posterior-confidence geometry}
Let $Y\sim\operatorname{Bernoulli}(1/2)$ and let $D$ have a finite alphabet. Assume the observation experiment is symmetric: there is an involution $T$ on observations such that
\begin{equation}
 P(D=T(d)\mid Y=0)=P(D=d\mid Y=1).
\end{equation}
This includes repeated binary symmetric observations that are independent conditional on $Y$. Define
\begin{equation}
 w_d=P(D=d),\qquad s_d=2P(Y=1\mid D=d)-1.
\end{equation}
Then $w_{T(d)}=w_d$, $s_{T(d)}=-s_d$. Set
\begin{equation}
 S=\sum_d w_d(s_d)_+,\qquad r=\tfrac12-S,
\end{equation}
where $(z)_+=\max\{z,0\}$ and $r$ is the Bayes error using all observations. Assume $S>0$.

For a kernel $k=(k_d)$, write
\begin{equation}
 q=\sum_dw_dk_d,\qquad A=\sum_dw_ds_dk_d.
\end{equation}
The induced source--memory channel obeys
\begin{equation}
 a=P(M=1\mid Y=0)=q-A,\qquad b=P(M=1\mid Y=1)=q+A.
\end{equation}
Consequently its Bayes error and target information are
\begin{equation}
 e(k)=\tfrac12-|A|,\qquad
 I(Y;M)=\MI(q,A):=\hbin(q)-\tfrac12\hbin(q-A)-\tfrac12\hbin(q+A).
 \label{eq:IA}
\end{equation}
The physical controller uses only $D$. The source--memory channel is induced by averaging the encoder over $P(D\mid Y)$; no access to the latent label is assumed.

\subsection{Exact cost--risk and cost--information boundaries}
Sort observations so that $s_1\geq\cdots\geq s_m$, retaining their weights $w_i$. Put $Q_j=\sum_{i\leq j}w_i$, $V_j=\sum_{i\leq j}w_is_i$, with $Q_0=V_0=0$. For $Q_{j-1}\leq q\leq Q_j$, define
\begin{equation}
 A_*(q)=V_{j-1}+s_j(q-Q_{j-1}).
 \label{eq:confidence-frontier}
\end{equation}
We use $0\leq q\leq1/2$. The associated kernel selects observations in descending order of posterior score, with possible randomization at one boundary score. Ties can be handled in any order.

\begin{theorem}[Exact finite-observation boundaries]
\label{thm:frontier}
For every $q\in[0,1/2]$,
\begin{align}
 \max_{k:\ \sum w_dk_d=q}\left|\sum_dw_ds_dk_d\right|&=A_*(q),\\
 \min_{k:\ \sum w_dk_d=q}e(k)&=\tfrac12-A_*(q),\\
 \max_{k:\ \sum w_dk_d=q}I(Y;M)&=\Psi(q):=\MI(q,A_*(q)).
 \label{eq:psi}
\end{align}
For $r\leq\epsilon<1/2$, let $\delta=1/2-\epsilon$. Then the exact task transport is
\begin{equation}
 q_T(\epsilon)=\min\{q\in[0,1/2]: A_*(q)\geq\delta\}.
 \label{eq:qT}
\end{equation}
The minimum target information among task-feasible kernels equals
\begin{equation}
 R_D^Y(\epsilon)=\ln2-\hbin(\epsilon).
 \label{eq:RD}
\end{equation}
For the information-only relaxation at this same threshold,
\begin{equation}
 q_I(\epsilon)=\min\{q\in[0,1/2]:\Psi(q)\geq R_D^Y(\epsilon)\}.
 \label{eq:qI}
\end{equation}
These minima are attained. The quantity $R_D^Y$ concerns $I(Y;M)$, not data compression $I(D;M)$.
\end{theorem}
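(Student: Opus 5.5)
The plan is to prove the first identity by a fractional-knapsack exchange argument, then derive the others from it together with monotonicity properties of $\MI$.

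\textbf{Step 1: the frontier $A_*$.} Fix $q\in[0,1/2]$. The map $k\mapsto\sum_d w_ds_dk_d$ is linear on the polytope $\{0\le k_d\le1,\ \sum w_dk_d=q\}$. Its maximum is obtained by filling observations in descending order of $s_d$, which is the greedy fractional-knapsack solution. To prove optimality, take any feasible $k$ in which some $k_i<1$ and $k_j>0$ with $s_i>s_j$. Moving mass $w_i\Delta k_i=w_j\Delta k_j$ from $j$ to $i$ does not decrease $A$. Hence the greedy kernel attains $A_*(q)$.

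For the minimum, the symmetric greedy from the bottom gives $-A_*(q)$. This uses $s_{T(d)}=-s_d$ and $w_{T(d)}=w_d$, so the sorted score list is antisymmetric. Therefore $\max|A|=A_*(q)$. Note that $A_*(q)\ge0$ for $q\le1/2$, since positive scores occupy at most half the mass: $T$ pairs positive scores with negative ones. The risk identity follows at once from $e(k)=\tfrac12-|A|$.

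\textbf{Step 2: the information frontier $\Psi$.} For fixed $q$, $\MI(q,A)$ is even in $A$. It is also increasing in $|A|$ on the feasible range, because $\partial_A\MI=\tfrac12\ln\frac{(q+A)(1-q+A)}{(q-A)(1-q-A)}$ has the sign of $A$ (its argument is at least one when $A\ge0$). Equivalently, $\MI$ is convex in $A$ with minimum at $0$. So $\max I(Y;M)$ at fixed $q$ is $\MI(q,A_*(q))$, attained by the greedy kernel. Feasibility requires $|A|\le\min(q,1-q)$, and this holds automatically since $|s_d|\le1$.

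\textbf{Step 3: the task transport $q_T$.} Task feasibility at level $\epsilon$ means $|A|\ge\delta$. The greedy $A_*$ is continuous and nondecreasing on $[0,1/2]$, since $s_j\ge0$ there. Hence the feasible set of $q$ is closed. One must check that transport $q>1/2$ never helps. I would argue via complementation $k\mapsto 1-k$, which maps $q\mapsto1-q$ and $A\mapsto -A$. For $q>1/2$ we then have $A_*(q)\le A_*(1/2)=S$, and the minimal $q$ achieving $\delta\le S$ already lies in $[0,1/2]$ because $A_*(1/2)=S$. The condition $\epsilon\ge r$ gives $\delta\le S$, so the feasible set is nonempty and the minimum is attained.

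\textbf{Step 4: the threshold $R_D^Y$ and $q_I$.} The lower bound $I(Y;M)\ge\ln2-\hbin(e)\ge\ln2-\hbin(\epsilon)$ is the binary rate--distortion/Fano bound: $H(Y\mid M)\le\hbin(P_e)$ for binary $Y$. For attainment, take the greedy kernel at $q=1/2$ with $A=S$ and mix it with an uninformative kernel $k_d\equiv1/2$. This keeps $q=1/2$ and scales $A$ to exactly $\delta$, giving $\MI(1/2,\delta)=\ln2-\hbin(\epsilon)$.

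For $q_I$, $\Psi$ is continuous on $[0,1/2]$, and $\Psi(1/2)=\ln2-\hbin(r)\ge R_D^Y(\epsilon)$, so the set in \eqref{eq:qI} is nonempty and closed. The restriction to $q\le1/2$ is again justified by the symmetry $\MI(1-q,-A)=\MI(q,A)$ and complementation. This requires that $\Psi$ on $[1/2,1]$ add nothing cheaper; this is trivially true since those $q$ are larger.

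\textbf{Main obstacle.} The delicate points are the exchange argument with ties and boundary randomization, and justifying the restriction $q\le1/2$ in both minima. I expect the rest to be routine.
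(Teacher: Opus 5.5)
Your proposal is correct and follows essentially the paper's route: the score-ordering exchange argument for $A_*$ with the involution $T$ giving the symmetric minimum, evenness and monotonicity of $\MI$ in $|A|$ for $\Psi$, the observation that a feasible point already exists at $q\le 1/2$ (or complementation) to restrict both minima to $[0,1/2]$, and Fano plus a symmetric-channel construction for $R_D^Y$. Your achievability step mixes the greedy $q=1/2$ kernel with the constant kernel $k_d\equiv 1/2$ at weight $\delta/S$; this is exactly the paper's step of flipping the Bayes decision with probability $(\epsilon-r)/(1-2r)$, since $1-2(\epsilon-r)/(1-2r)=\delta/S$.
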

\begin{proof}
For a fixed $q$, maximize $A=\sum_iw_is_ik_i$ under $0\leq k_i\leq1$ and $\sum_iw_ik_i=q$. If $s_i>s_j$ while $k_i<1$ and $k_j>0$, transferring an equal amount of selected probability from $j$ to $i$ strictly increases $A$ and preserves $q$. Repeating such exchanges gives the threshold kernel and \eqref{eq:confidence-frontier}. Symmetry yields the negative counterpart with the same $q$, so the maximum absolute value is also $A_*(q)$. The error formula in \eqref{eq:IA} proves the second claim.

For $A>0$ in the interior of the channel domain,
\begin{equation}
 \frac{\partial\MI(q,A)}{\partial A}
 =\tfrac12\ln\frac{(q+A)(1-q+A)}{(q-A)(1-q-A)}>0.
\end{equation}
The function is even in $A$, with continuous boundary values. Thus maximizing absolute contrast also maximizes information at fixed $q$. This proves \eqref{eq:psi}. Encoders of cost $q>1/2$ can be complemented to encoders of cost $1-q<1/2$, preserving both Bayes error and mutual information. Hence no optimum requires $q>1/2$, and \eqref{eq:qT} follows.

For \eqref{eq:RD}, the binary Fano inequality gives $I(Y;M)\geq\ln2-\hbin(\epsilon)$ for any task-feasible encoder. A symmetric Bayes rule, with a fair coin on posterior ties, induces a binary symmetric channel with crossover probability $r$. Flipping its decision independently with probability $(\epsilon-r)/(1-2r)$ produces a binary symmetric channel of crossover probability $\epsilon$, which achieves equality. This encoder factors through $D$, so the lower bound is achievable within the observation constraint. The fixed-cost information boundary now gives \eqref{eq:qI}. Compactness of the finite-dimensional kernel cube and continuity establish attainment.
\end{proof}

\begin{remark}[No monotonicity shortcut]
$A_*(q)$ is piecewise affine and nondecreasing on $[0,1/2]$, but $\Psi(q)$ need not be globally increasing. For example, observations with posterior score zero can leave $A_*$ constant while increasing $q$ reduces information. Thus \eqref{eq:qI} is a first-crossing definition, not an assumed global inverse. On each score segment the channel varies affinely, and mutual information is convex in that channel. The minimum feasible $q$ can be obtained by scanning segment endpoints and solving the first threshold crossing in a qualifying segment.
\end{remark}

\begin{theorem}[Strict finite-observation representation gap]
\label{thm:stat-gap}
For every informative finite symmetric experiment and every
\begin{equation}
 r<\epsilon<1/2,
\end{equation}
one has $q_I(\epsilon)<q_T(\epsilon)$. In particular, the gap survives arbitrary nonzero observation noise in this class, provided the requested error lies strictly above the data Bayes floor.
\end{theorem}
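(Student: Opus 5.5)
The plan is to show that at the task transport $q_T=q_T(\epsilon)$ the information frontier $\Psi$ lies \emph{strictly} above the threshold $R_D^Y(\epsilon)$. Continuity of $\Psi$ then lets us lower $q$ slightly while staying information-feasible, which gives $q_I(\epsilon)<q_T(\epsilon)$ by the first-crossing definition \eqref{eq:qI}. Throughout, set $\delta=1/2-\epsilon$, so the hypothesis $r<\epsilon<1/2$ reads $0<\delta<S$.

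\textbf{Step 1: locate $q_T$.} Since $|s_d|\le 1$, the frontier \eqref{eq:confidence-frontier} satisfies $0\le A_*(q)\le q$. It is continuous and piecewise affine, with $A_*(0)=0<\delta$. Its value at $1/2$ is $A_*(1/2)=S$. To see this, note that the symmetry $w_{T(d)}=w_d$, $s_{T(d)}=-s_d$ forces the positive-score mass to be at most $1/2$, and zero scores add nothing to $V_j$. Because $\delta<S$, the intermediate value theorem and \eqref{eq:qT} give $A_*(q_T)=\delta$ and $q_T<1/2$. Indeed, either $A_*$ reaches $S$ at the positive-score mass, which is at most $1/2$ and is then strictly passed through the level $\delta<S$, or it reaches $S$ only at $1/2$, where again $\delta<S$ forces the crossing earlier. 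Also $q_T\ge A_*(q_T)=\delta$, so $(q_T,\delta)$ is in the channel domain.

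\textbf{Step 2: the key inequality $\MI(q,\delta)>\MI(1/2,\delta)$ for $q\in[\delta,1/2)$.} First observe that $\MI(1/2,\delta)=\ln 2-\hbin(\epsilon)=R_D^Y(\epsilon)$ by \eqref{eq:IA} and \eqref{eq:RD}, since the channel at $q=1/2$ is the binary symmetric channel with crossover $\epsilon$. Next, with $A=\delta$ fixed, view $q\mapsto\MI(q,\delta)$ on $[\delta,1-\delta]$. It is symmetric about $1/2$, because $\hbin$ is symmetric and hence $\MI(1-q,A)=\MI(q,A)$. Its second derivative in the interior is
\begin{equation*}
 \partial_q^2\MI=-f(q)+\tfrac12 f(q-\delta)+\tfrac12 f(q+\delta),\qquad f(x)=\frac{1}{x(1-x)} .
\end{equation*}
This is strictly positive because $f$ is strictly convex on $(0,1)$ and $\delta>0$. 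A strictly convex function that is symmetric about $1/2$ has its unique minimum at $1/2$, so $\MI(q,\delta)>\MI(1/2,\delta)$ for $q\ne 1/2$. Continuity handles the boundary point $q=\delta$. This step is the main obstacle. The point is that, at fixed contrast, the symmetric channel is the unique least-informative one, so spending less transport mass while keeping the same Bayes contrast strictly increases target information.

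\textbf{Step 3: conclude.} By Steps 1 and 2, and since $\Psi(q)=\MI(q,A_*(q))$,
\begin{equation*}
\Psi(q_T)=\MI(q_T,\delta)>R_D^Y(\epsilon).
\end{equation*}
The map $q\mapsto(q,A_*(q))$ is continuous and stays in the closed channel domain (because $A_*(q)\le q\le 1/2$), and $\MI$ is continuous there. Hence $\Psi$ is continuous, and $\Psi(q)>R_D^Y(\epsilon)$ on a neighborhood of $q_T$. Here $q_T>0$, since $\delta>0$ and $A_*(0)=0$. So some $q<q_T$ is information-feasible, and by \eqref{eq:qI}, $q_I(\epsilon)\le q<q_T(\epsilon)$. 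The argument uses only $\delta<S$, i.e.\ $\epsilon>r$, and $\delta>0$. This is why the gap persists for every informative experiment with arbitrary nonzero noise, provided the requested error lies strictly above the data Bayes floor.
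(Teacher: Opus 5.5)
Your proof is correct. Its outer structure matches the paper's: place $q_T$ strictly below $1/2$ with $A_*(q_T)=\delta$, prove $\Psi(q_T)>R_D^Y(\epsilon)$, then use continuity of $\Psi$ to find an information-feasible cost below $q_T$.

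The two routes differ at the key strictness step.

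\textbf{The paper's route.} It uses the equality case of Fano. The threshold encoder has error exactly $\epsilon$. Equality in $H(Y\mid M)=H(E\mid M)\leq H(E)$ would make the error indicator $E$ independent of $M$. With $P(Y=1)=1/2$, this forces $P(g(M)=1)=1/2$, i.e.\ $q_T=1/2$, contradicting $q_T<p_+\leq 1/2$.

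\textbf{Your route.} You fix the contrast at $A=\delta$. Strict convexity of $x\mapsto 1/(x(1-x))$ then shows that $q\mapsto\MI(q,\delta)$ is strictly convex and symmetric about $1/2$. Hence the binary symmetric channel is its unique minimizer, with value $\ln 2-\hbin(\epsilon)$.

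\textbf{What each buys.} The paper's argument is shorter and purely information-theoretic, and it reuses the Fano step already needed for $R_D^Y$. Yours is a calculus argument that gives slightly more: $\MI(q,\delta)$ is strictly decreasing on $[\delta,1/2]$. So at fixed Bayes contrast, every reduction in transport strictly raises target information. This is a quantitative form of the ``unbalanced output'' mechanism described in the Discussion.

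\textbf{One detail to state explicitly in Step 1.} The claim $A_*(1/2)=S$ also requires that the frontier has not entered negative scores by $q=1/2$. That is, the nonnegative-score mass $1-p_+$ must be at least $1/2$, which follows from the same symmetry $p_-=p_+$.
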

\begin{proof}
Let $p_+=P(s_D>0)\leq1/2$. Since $0<\delta<S$, the threshold encoder reaches contrast $\delta$ before exhausting all positive scores. Therefore $0<q_T<p_+\leq1/2$, and its Bayes error is exactly $\epsilon$.

We show that Fano is strict for this encoder. If equality held in $H(Y\mid M)\leq\hbin(\epsilon)$, the binary error indicator $E$ for a Bayes decoder would be independent of $M$. Since the error is less than $1/2$, the decoder is nonconstant. Uniformity of $Y$ would then give
\begin{equation}
 \tfrac12=\epsilon+(1-2\epsilon)P(g(M)=1),
\end{equation}
forcing $P(g(M)=1)=1/2$, hence $q_T=1/2$, a contradiction. Thus $\Psi(q_T)>R_D^Y(\epsilon)$. Continuity, $\Psi(0)=0<R_D^Y(\epsilon)$, and the intermediate value theorem yield a feasible information cost strictly below $q_T$.
\end{proof}

\subsection{What happens at the boundaries}
At $\epsilon=1/2$, both optimal transport costs are zero. At $\epsilon<r$, the statistical task is infeasible. At $\epsilon=r$, the task cost is $q_T=p_+$, where $p_+=\Prob(s_D>0)$. A strict gap occurs exactly when some $q\in[0,p_+)$ satisfies $\Psi(q)\geq\ln2-\hbin(r)$. If $p_+<1/2$, then $r>0$ and the strict-Fano proof already gives a strict gap at the floor. If $p_+=1/2$, equality or strictness must be determined from the information boundary; the interior theorem does not decide this case.

A single binary symmetric observation with noise $\nu\in(0,1/2)$ gives $r=\nu$, $A_*(q)=(1-2\nu)q$ and
\begin{equation}
 \Psi_\nu(q)=\hbin(q)-\tfrac12\hbin(2\nu q)
                         -\tfrac12\hbin(2(1-\nu)q).
\end{equation}
This function is strictly increasing: the channel at smaller $q$ is obtained from the larger-$q$ channel by independently discarding some $M=1$ events, with strict loss of information for a nontrivial source correlation. Hence $q_I=q_T=1/2$ at $\epsilon=\nu$. The gap can therefore vanish at the statistical floor, despite being positive at every interior target.

\section{Direct observation as a solvable special case}\label{sec:binary}
For direct observation $D=Y$, write
\begin{equation}
 a=\Prob(M=1\mid Y=0),\qquad b=\Prob(M=1\mid Y=1),\qquad q=(a+b)/2.
 \label{eq:ab}
\end{equation}
The joint graph and reset convention are those of Section~\ref{sec:setup}. Equation~\eqref{eq:data-transport} specializes to
\begin{equation}
 \W(p_0,p_K)=q,\label{eq:transport}
\end{equation}
and the error and information are
\begin{align}
 e(K)&=(1-|b-a|)/2,\label{eq:bayes}\\
 I(K)&=\hbin(q)-\tfrac12\hbin(a)-\tfrac12\hbin(b).\label{eq:mi}
\end{align}
Set
\begin{equation}
 R(\epsilon)=\ln2-\hbin(\epsilon),\qquad 0\leq\epsilon\leq1/2.\label{eq:binary-rd}
\end{equation}
For $0\leq\epsilon\leq1/2$ and $0\leq R\leq\ln2$, the two endpoint optimizations reduce to
\begin{align}
 W^*_{\Task}(\epsilon)&=\min_{a,b\in[0,1]:e(K)\leq\epsilon}q,\\
 W^*_{\Info}(R)&=\min_{a,b\in[0,1]:I(K)\geq R}q.\label{eq:two-optima}
\end{align}
The following specialization of Theorem~\ref{thm:frontier} has a self-contained proof in Appendix~\ref{app:binary-proof}.

\subsection{Exact direct-observation gap}
\begin{theorem}[Exact binary representation gap]
\label{thm:binary}
Fix $0<\epsilon<1/2$ and define
\begin{equation}
 \phi(q)=\hbin(q)-\tfrac12\hbin(2q),\qquad 0\leq q\leq\tfrac12.
 \label{eq:phi}
\end{equation}
There is a unique $q_\epsilon\in(0,1/2)$ satisfying
\begin{equation}
 \phi(q_\epsilon)=R(\epsilon)=\ln2-\hbin(\epsilon).
 \label{eq:q-root}
\end{equation}
The exact optimal costs are
\begin{equation}
 W^*_{\Task}(\epsilon)=\tfrac12-\epsilon,
 \qquad W^*_{\Info}(R(\epsilon))=q_\epsilon,
\end{equation}
and their difference is strictly positive:
\begin{equation}
 \Delta_W(\epsilon)=\tfrac12-\epsilon-q_\epsilon>0.
 \label{eq:strict-gap}
\end{equation}
A task-optimal encoder is $(a,b)=(0,1-2\epsilon)$. An information-only optimal encoder is $(a,b)=(0,2q_\epsilon)$, which satisfies
\begin{equation}
 I(K)=R(\epsilon),\qquad e(K)=\tfrac12-q_\epsilon>\epsilon.
\end{equation}
Swapping the source labels gives equivalent optimizers without changing the physical reset state.
\end{theorem}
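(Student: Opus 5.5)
The plan is to specialize Theorem~\ref{thm:frontier} to $D=Y$ and then verify the remaining claims directly. Direct observation has two observations, $d\in\{0,1\}$, with $w_d=1/2$ and scores $s_1=1$, $s_0=-1$, so $r=0$ and $S=1/2$. Sorting by score, the frontier \eqref{eq:confidence-frontier} becomes $A_*(q)=q$ on $[0,1/2]$: all mass $q$ is placed on $Y=1$. In the coordinates \eqref{eq:ab} this is $b=2q$, $a=0$, and $A=(b-a)/2$.

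The task side is then immediate. Equation \eqref{eq:qT} requires $q\geq\delta=1/2-\epsilon$, so $W^*_{\Task}(\epsilon)=1/2-\epsilon$, attained by $(a,b)=(0,1-2\epsilon)$. For a self-contained version, note that $e(K)\leq\epsilon$ forces $|b-a|\geq1-2\epsilon$, and $q=(a+b)/2\geq|b-a|/2$ since $a,b\geq0$. The reduction to $q\leq1/2$ uses the complementation argument of Theorem~\ref{thm:frontier}. Label swapping $(a,b)\mapsto(b,a)$ preserves $q$, $e$ and $I$, which gives the final remark about equivalent optimizers with the same reset state.

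For the information side, $\Psi(q)=\MI(q,q)=\hbin(q)-\tfrac12\hbin(0)-\tfrac12\hbin(2q)=\phi(q)$, and $q_I$ is the first crossing of level $R(\epsilon)$ by $\phi$. The main step is uniqueness of the root $q_\epsilon$, which I will get from strict monotonicity of $\phi$ on $(0,1/2)$. Differentiating gives
\[
\phi'(q)=\ln\frac{1-q}{q}-\ln\frac{1-2q}{2q}=\ln\frac{2(1-q)}{1-2q}>0 .
\]
Together with $\phi(0)=0<R(\epsilon)<\ln2=\phi(1/2)$ (using $0<\epsilon<1/2$), continuity yields a unique $q_\epsilon\in(0,1/2)$ with $\phi(q_\epsilon)=R(\epsilon)$. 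By monotonicity, $\{q:\phi(q)\geq R\}=[q_\epsilon,1/2]$, so $W^*_{\Info}(R(\epsilon))=q_\epsilon$, attained by $(0,2q_\epsilon)$ with $I=R(\epsilon)$. That encoder has $e=\tfrac12-\tfrac12|2q_\epsilon|=\tfrac12-q_\epsilon$.

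The expected obstacle is strictness, $q_\epsilon<1/2-\epsilon$. Theorem~\ref{thm:stat-gap} applies directly, since the experiment is informative and $r=0<\epsilon<1/2$. For the self-contained version I will show $\phi(1/2-\epsilon)>R(\epsilon)$; monotonicity of $\phi$ then places $q_\epsilon$ strictly below $1/2-\epsilon$. Equivalently, the Z-channel $(0,1-2\epsilon)$ has error exactly $\epsilon$ and strictly exceeds the Fano bound. The strict-Fano argument of Theorem~\ref{thm:stat-gap} does this: equality would make the error indicator independent of $M$, forcing $P(g(M)=1)=1/2$, whereas here $q=1/2-\epsilon<1/2$.

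As a sanity check, I will also verify explicitly
\[
\hbin(\tfrac12-\epsilon)-\tfrac12\hbin(2\epsilon)>\ln2-\hbin(\epsilon),
\]
for example by noting that equality holds at $\epsilon\to0$ and $\epsilon=1/2$, and comparing derivatives. Strictness then gives $e=\tfrac12-q_\epsilon>\epsilon$ and $\Delta_W>0$.
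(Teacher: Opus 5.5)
Your proof is correct, but it takes a different route from the paper's self-contained argument in Appendix~\ref{app:binary-proof}. You obtain the fixed-cost bound $I(K)\leq\phi(q)$ by specializing the score-sorting frontier of Theorem~\ref{thm:frontier}, where $A_*(q)=q$ and hence $\Psi=\phi$. You obtain strictness from the strict-Fano argument of Theorem~\ref{thm:stat-gap}, applied to the Z-channel $(0,1-2\epsilon)$.

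The paper instead bounds $I(K)$ directly, using concavity of $a\mapsto\hbin(a)+\hbin(2q-a)$ on $[0,2q]$, whose minimum $\hbin(2q)$ is attained at both endpoints. It proves strictness by calculus on $G(\epsilon)=\phi(1/2-\epsilon)-\ln2+\hbin(\epsilon)$: one has $G(0)=G(1/2)=0$, and $G'(\epsilon)=\ln\frac{2(1-\epsilon)}{1+2\epsilon}$ is positive on $(0,1/4)$ and negative on $(1/4,1/2)$. That is exactly your ``sanity check''. To make that check rigorous you need this sign pattern of $G'$, not only the equality at the endpoints.

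Your route is shorter once the general theorems are available, and it exposes the mechanism. The task-optimal endpoint has $P(M=1)=1/2-\epsilon<1/2$, so the error indicator cannot be independent of $M$; indeed $P(E=1\mid M=1)=0$. The same argument carries over verbatim to noisy symmetric experiments.

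The paper's route avoids both the exchange argument and the monotonicity of $\MI(q,A)$ in $|A|$ up to the boundary $a=0$. It uses only concavity of $\hbin$ and one-variable calculus. It also gives a little extra quantitative information: the information surplus of the task-optimal encoder over the threshold peaks at $\epsilon=1/4$.
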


\subsection{Three distinct optimization objectives}
The information--transport boundary $\phi$ and its Z-channel optimizer are specializations of the measurement result in Ref.~\cite{NagaseSagawa2024}, Eqs.~(9) and (14), under a uniform binary source. The task-constrained comparison is the focus here.

Table~\ref{tab:channels} compares three encoders. Minimizing information at a prescribed error, minimizing transport at that error, and minimizing transport at a prescribed information threshold select different channels.
\begin{table}[htbp]
\centering
\small
\setlength{\tabcolsep}{5pt}
\begin{tabular}{@{}lcccc@{}}
\toprule
Encoder & $(a,b)$ & Bayes error & $I(Y;M)$ & $\W$\\
\midrule
Rate--distortion & $(\epsilon,1-\epsilon)$ & $\epsilon$ & $R(\epsilon)$ & $1/2$\\
Task transport & $(0,1-2\epsilon)$ & $\epsilon$ & $\phi(1/2-\epsilon)$ & $1/2-\epsilon$\\
Information transport & $(0,2q_\epsilon)$ & $1/2-q_\epsilon$ & $R(\epsilon)$ & $q_\epsilon$\\
\bottomrule
\end{tabular}
\caption{Exact channel comparison for $0<\epsilon<1/2$. The last row violates the target error despite attaining its necessary information threshold.}
\label{tab:channels}
\end{table}
The first and third rows carry identical information but have different risks and transport costs. The second row is task-feasible at lower transport than the first, but carries strictly more information. The result concerns the feasible-set relaxation, not a contradiction of rate--distortion coding achievability.

\section{Repeated noisy observations}
\subsection{Binomial experiment}
Fix $0<\nu<1/2$. Let $D^n=(X_1,\ldots,X_n)$ with $X_i=Y\oplus Z_i$, where $\oplus$ denotes addition modulo two and the $Z_i$ are independent Bernoulli$(\nu)$ noise variables, independent of $Y$. The count $j=\sum_iX_i$ is sufficient for both the objective and constraints: any kernel on the full observation sequence can be averaged over strings with the same count, preserving its cost and induced source--memory channel. Define
\begin{align}
 p_1(j)&=\binom{n}{j}(1-\nu)^j\nu^{n-j},\\
 p_0(j)&=\binom{n}{j}\nu^j(1-\nu)^{n-j},\\
 w_j&=\tfrac12[p_1(j)+p_0(j)],\qquad
 s_j=\frac{p_1(j)-p_0(j)}{p_1(j)+p_0(j)}.
\end{align}
Here $p_y(j)=\Prob(\sum_i X_i=j\mid Y=y)$. For $0<\nu<1/2$, $s_j$ increases with $j$, so an optimal encoder uses an upper threshold on the count with possible randomization at the threshold count. The Bayes error is
\begin{equation}
 r_n=\Prob\{\operatorname{Bin}(n,\nu)>n/2\}
       +\tfrac12\Prob\{\operatorname{Bin}(n,\nu)=n/2\}.
\end{equation}
Equations~\eqref{eq:confidence-frontier}--\eqref{eq:qI} determine both optima from at most $n+1$ score levels.

\subsection{Same Bayes error, different transport costs}
One and two noisy observations have the same Bayes error $r_1=r_2=\nu$. For two observations, the tied count carries posterior score zero. At the Bayes floor $\epsilon=\nu$,
\begin{equation}
 q_{T,1}=\tfrac12,\qquad
 q_{T,2}=\frac{(1-\nu)^2+\nu^2}{2}<\tfrac12.
\end{equation}
The one-observation gap is zero, whereas the two-observation gap is strictly positive. For $\nu=0.2$, the costs are
\begin{center}
\begin{tabular}{@{}cccc@{}}\toprule
Samples & Bayes error & $q_T$ & $q_I$\\\midrule
1 & $0.2$ & $0.5$ & $0.5$\\
2 & $0.2$ & $0.34$ & $0.300667502$\\\bottomrule
\end{tabular}
\end{center}
The same distinction also holds away from the floor: at $\epsilon=0.25$ and $\nu=0.2$, one observation gives $(q_T,q_I)=(0.416666667,0.408012516)$, whereas two give $(0.283333333,0.224286655)$. Both have Bayes error $0.2$. This is not reproduced merely by substituting the Bayes error into the original noiseless formula: identical Bayes errors and identical task-information thresholds have different cost frontiers. The distribution of posterior confidence matters.

\subsection{Sample dependence and limiting behavior}
At any fixed feasible error target, both $q_{T,n}$ and $q_{I,n}$ are nonincreasing in $n$, because one can ignore an additional observation. Their difference need not be monotone. For $\nu=0.2$ and $\epsilon=0.2$, numerical evaluation gives
\begin{center}
\begin{tabular}{@{}cccc@{}}\toprule
$n$ & $q_{T,n}$ & $q_{I,n}$ & Difference\\\midrule
2 & $0.340000000$ & $0.300667502$ & $0.039332498$\\
3 & $0.340000000$ & $0.249472892$ & $0.090527108$\\
4 & $0.314400000$ & $0.246584757$ & $0.067815243$\\
5 & $0.304647619$ & $0.236834298$ & $0.067813321$\\
10 & $0.300058947$ & $0.228641285$ & $0.071417662$\\\bottomrule
\end{tabular}
\end{center}
For fixed $\nu\in(0,1/2)$ and $\epsilon\in(0,1/2)$, the target is feasible for all sufficiently large $n$, and the two optima converge to their noiseless counterparts as $n\to\infty$. Indeed, $r_n\to0$, the noiseless optima are lower bounds because data-mediated channels form a subset of all source channels, and the symmetric majority decision induces a binary symmetric channel of error $r_n$. Writing $\delta=1/2-\epsilon$, thinning its positive output yields feasible encoders with task cost $\delta/(1-2r_n)$ and information boundary $\Psi_{r_n}(q)$. These converge to the noiseless task cost $\delta$ and information boundary $\hbin(q)-\tfrac12\hbin(2q)$, respectively. Continuity and strict increase of the latter bound its threshold crossing and complete the squeeze argument.

\section{Physical writing under a finite rate cap}\label{sec:control}
\subsection{Admissible physical model}
Fix a duration $\tau>0$, an integrated mobility budget $\B>0$, and a common rate cap $\Lambda>0$. On every data sector, allow measurable, piecewise continuous rates
\begin{equation}
 0\leq r_d^+(t),r_d^-(t)\leq\Lambda.
\end{equation}
An edge may be switched off in both directions. On active edges the construction below has both rates positive and admits local detailed balance with time-dependent energies. There is no additional lower rate bound or independent energy/barrier bound. For a bath at temperature $T>0$, our local detailed balance convention on an active edge is
\begin{equation}
 E_{d,1}(t)-E_{d,0}(t)=-k_{\rm B}T\ln\frac{r_d^+(t)}{r_d^-(t)}.
\end{equation}
The admissible control class permits the data-dependent energy difference and kinetic prefactor to be selected independently, subject to the rate cap. Equivalently, any positive rate pair can be represented with prefactor $a_d(t)=\sqrt{r_d^+(t)r_d^-(t)}$ and the energy difference above. The constructed rates give finite energy differences on active sectors; inactive sectors have both rates zero. This is a specified control model, not a demonstration that a fixed device with bounded energy range or restricted barrier control can implement every protocol. Exact reset is an initial resource whose preparation cost is excluded.

Let $x_d(t)=\Prob(M_t=1\mid D=d)$. The joint forward and backward probability fluxes are $u_d=w_d(1-x_d)r_d^+$ and $v_d=w_dx_dr_d^-$. Define
\begin{equation}
 J_d=u_d-v_d,\quad F_d=\ln(u_d/v_d),\quad
 \sigma(t)=\sum_d J_dF_d,\quad
 m(t)=\sum_d\frac{J_d}{F_d}.
\end{equation}
For $u_d=v_d>0$, the mobility contribution is defined by continuity as $J_d/F_d=u_d$, and the entropy-production contribution is zero. If both fluxes vanish, both contributions are zero. If exactly one flux vanishes, the mobility contribution is zero and the entropy-production density has its extended value $+\infty$; an isolated divergence at the reset boundary is treated through the improper time integral. Let $\Sigma=\int_0^\tau\sigma\,\dd t$ and $\mathcal M=\int_0^\tau m\,\dd t$. We impose $\mathcal M\leq\B$.

An admissible protocol $\Pi$ determines a terminal encoder $K_\Pi$. Define
\begin{align}
 C_{T,\Lambda}(\epsilon)&=\inf_{\Pi}\{\Sigma[\Pi]:e(K_\Pi)\leq\epsilon,\ \mathcal M[\Pi]\leq\B\},\\
 C_{I,\Lambda}(R)&=\inf_{\Pi}\{\Sigma[\Pi]:I(Y;M_\tau)\geq R,\ \mathcal M[\Pi]\leq\B\},
\end{align}
where both infima use the same reset state, graph, duration $\tau$, and rate cap $\Lambda$. Infeasible problems have value $+\infty$. We also use $C_\Lambda(K)$ for the infimum with a prescribed endpoint encoder. The infima over protocols, including those with prescribed endpoints, are not assumed to be attained. Mobility is the logarithmic-mean kinetic quantity above, not expected jump activity.

For any admissible writing protocol, the standard discrete transport bound gives
\begin{equation}
 \Sigma\mathcal M\geq q^2,\qquad \Sigma\geq q^2/\B.
 \label{eq:lower}
\end{equation}
Here it also follows directly from $\sum_d|J_d|\leq\sqrt{m\sigma}$ and integrated mass conservation. This is an application of existing transport thermodynamics, not a new inequality~\cite{VuSaito2023,NagaseSagawa2024}.

\begin{theorem}[Explicit protocol under a uniform finite rate cap]
\label{thm:regularization}
Fix an encoder with $0\leq k_d\leq\kappa<1$ and $q>0$. Put
\begin{equation}
 F=\frac q\B,\qquad c=\frac1{e^F-1}.
\end{equation}
Suppose
\begin{equation}
 \Lambda>\frac c\tau,
 \qquad
 \Lambda\geq\frac{\kappa(1+c)}{\tau(1-\kappa)}.
 \label{eq:rate-threshold}
\end{equation}
There is a protocol starting exactly from reset, reaching the prescribed encoder exactly at $\tau$, with all rates bounded by $\Lambda$, for which
\begin{equation}
 \mathcal M\leq\B,\qquad
 \Sigma=\frac{q^2}{\B}+\frac q{\tau\Lambda}\ln(1+c).
 \label{eq:regularized-cost}
\end{equation}
Thus no singular transition rate is required. If $C_\Lambda(K)$ is the protocol cost infimum for this endpoint, then
\begin{equation}
 0\leq C_\Lambda(K)-q^2/\B
 \leq\frac q{\tau\Lambda}\ln(1+c)
\end{equation}
whenever \eqref{eq:rate-threshold} holds.
\end{theorem}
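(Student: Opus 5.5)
The plan is an explicit two-phase protocol in each active sector that keeps the probability flux constant throughout. I would switch off every sector with $k_d=0$ (both rates zero, so it contributes nothing to $\Sigma$ or $\mathcal M$). On each sector with $k_d>0$, I would impose the constant joint flux $J_d=w_dk_d/\tau$, so that $x_d(t)=k_dt/\tau$ moves linearly from $0$ to $k_d$ exactly at $\tau$. The lower bound $C_\Lambda(K)\ge q^2/\B$ is already \eqref{eq:lower}, so only the explicit protocol and its cost need work.

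In an ideal constant-force protocol one would take $u_d/v_d=e^F$ throughout. Combined with $J_d=u_d-v_d$, this gives $u_d=J_d(1+c)$ and $v_d=J_dc$, hence $\sigma=qF/\tau$, $\Sigma=qF=q^2/\B$ and $\mathcal M=\sum_dJ_d\tau/F=\B$. The obstruction is the reset boundary: $v_d=w_dx_dr_d^-$ forces $r_d^-=J_dc/(w_dx_d)\to\infty$ as $x_d\to0$.

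\emph{Phase 1} handles this boundary. I would saturate the backward rate, $r_d^-=\Lambda$, and choose $r_d^+$ so that $u_d=J_d+w_dx_d\Lambda$, which keeps the flux equal to $J_d$. This phase runs until $w_dx_d\Lambda=J_dc$, that is, until
\begin{equation*}
x_d^*=c\alpha_d,\qquad \alpha_d=\frac{k_d}{\tau\Lambda},
\end{equation*}
which is reached at the common time $t^*=c/\Lambda$. The condition $\Lambda>c/\tau$ guarantees $t^*<\tau$, equivalently $x_d^*<k_d$. \emph{Phase 2} then runs the constant-force protocol above, with $r_d^-=J_dc/(w_dx_d)\le\Lambda$ because $x_d\ge x_d^*$.

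Next I would verify the rate cap. In phase 2, $r_d^+=J_d(1+c)/(w_d(1-x_d))\le k_d(1+c)/(\tau(1-\kappa))\le\Lambda$ by the second condition in \eqref{eq:rate-threshold}. In phase 1, $w_dx_d\Lambda\le J_dc$ gives the same bound on $r_d^+$. Both rates are positive on active sectors after $t=0$, so local detailed balance holds with finite energies for $t>0$. The only divergence is $F_d\to\infty$ at $t=0$, which is logarithmic and therefore integrable as an improper integral.

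For mobility, phase 1 has $F_d=\ln(1+\alpha_d/x_d)\ge\ln(1+1/c)=F$, so $J_d/F_d\le J_d/F$ there. Hence $\mathcal M\le\B$.

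For entropy production, I would change variables using $\dd x_d=(J_d/w_d)\,\dd t$. The phase-1 contribution of sector $d$ is
\begin{equation*}
w_d\int_0^{c\alpha_d}\ln\!\left(1+\frac{\alpha_d}{x}\right)\dd x
=w_d\alpha_d\bigl[c\ln(1+1/c)+\ln(1+c)\bigr]
=w_d\alpha_d\bigl[cF+\ln(1+c)\bigr].
\end{equation*}
Phase 2 contributes $J_dF(\tau-t^*)$. Since $J_dFt^*=w_d\alpha_dcF$, the $cF$ terms combine, and the sector total is $w_dk_dF+w_dk_d\ln(1+c)/(\tau\Lambda)$. Summing over $d$ gives \eqref{eq:regularized-cost}. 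Combined with \eqref{eq:lower}, this yields the stated two-sided bound on $C_\Lambda(K)-q^2/\B$.

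The main obstacle is the boundary phase. One has to choose it so that the extra cost comes out exactly in the claimed closed form, while the handoff time $t^*=c/\Lambda$ is the same for all sectors and the rate cap holds throughout. Keeping the flux constant and saturating only $r_d^-$ is the choice I expect to achieve all three at once.
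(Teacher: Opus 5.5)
Your proposal is correct and is essentially the paper's proof. The paper's rates with $\rho(t)=\min\{\Lambda t,c\}$ and $r_d^-=\rho(t)/t$ are exactly your two phases: a saturated backward rate $r_d^-=\Lambda$ up to $t^*=c/\Lambda$, then constant force $F$. Both use the same constant flux $J_d=w_dk_d/\tau$, the same rate-cap and mobility bounds, and the same integral $\int_0^c\ln(1+1/z)\,\mathrm{d}z$; integrating in $x_d$ rather than $t$ is only cosmetic, and you could also note that both rates are already positive at $t=0$ ($r_d^+=k_d/\tau$, $r_d^-=\Lambda$), so local detailed balance holds there too.
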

\begin{proof}
For $k_d=0$, keep both rates zero. For $k_d>0$, prescribe the conditional memory probability
\begin{equation}
 x_d(t)=\Prob(M_t=1\mid D=d)=\frac{k_dt}{\tau}.
\end{equation}
Define $\rho(t)=\min\{\Lambda t,c\}$ and rates
\begin{equation}
 r_d^-(t)=\begin{cases}\Lambda,&t=0,\\ \rho(t)/t,&t>0,\end{cases}
 \qquad
 r_d^+(t)=\frac{k_d}{\tau}\frac{1+\rho(t)}{1-k_dt/\tau}.
 \label{eq:explicit-rates}
\end{equation}
The two rates are finite and continuous across $t_0=c/\Lambda$. Direct substitution gives
\begin{equation}
 (1-x_d)r_d^+-x_dr_d^-=k_d/\tau,
\end{equation}
so the master equation is satisfied exactly from $x_d(0)=0$. The backward rate is bounded by $\Lambda$. The forward rate obeys
\begin{equation}
 r_d^+(t)\leq\frac{k_d(1+c)}{\tau(1-k_d)}
 \leq\frac{\kappa(1+c)}{\tau(1-\kappa)}\leq\Lambda.
\end{equation}
The log ratio of the forward and backward rates is finite on every active sector, including $t=0$; hence the rates can be assigned a finite instantaneous energy difference under local detailed balance.

Writing $J_d=w_dk_d/\tau$, the joint fluxes are
\begin{equation}
 u_d=J_d(1+\rho(t)),\qquad v_d=J_d\rho(t).
\end{equation}
Every active sector has the same flux force
\begin{equation}
 f(t)=\ln(1+1/\rho(t))\geq\ln(1+1/c)=F\quad(t>0).
\end{equation}
Consequently $\mathcal M=(q/\tau)\int_0^\tau f(t)^{-1}\,\dd t\leq q/F=\B$. The force diverges logarithmically as $t\downarrow0$, due to the pure initial distribution, but the rates do not diverge and the entropy integral is finite. Since $t_0<\tau$,
\begin{align}
 \Sigma
 &=\frac q\tau\left[\int_0^{c/\Lambda}\ln\left(1+\frac1{\Lambda t}\right)\dd t
                  +(\tau-c/\Lambda)F\right]\\
 &=qF+\frac q{\tau\Lambda}\ln(1+c).
\end{align}
The last equality follows from
$\int_0^c\ln(1+1/z)\dd z=(c+1)\ln(c+1)-c\ln c$.
Combining this explicit upper bound with \eqref{eq:lower} proves the claim.
\end{proof}

\begin{remark}
The excess-cost bound concerns a family of finite-cap models as $\Lambda$ grows. It does not assert that the transport lower bound is attained at any fixed cap. It also does not extend to a prescribed positive lower rate bound on every edge: inactive sectors are part of the present admissible model.
\end{remark}

\subsection{Direct-observation specialization}
For direct observation $D=Y$ and $0<\epsilon<1/2$, write $q_T=1/2-\epsilon$, and let $q_I$ be the root of
\begin{equation}
 \hbin(q_I)-\tfrac12\hbin(2q_I)=\ln2-\hbin(\epsilon).
\end{equation}
Theorem~\ref{thm:binary} gives $0<q_I<q_T<1/2$. The information-optimal endpoint is $k_0=0,k_1=2q_I$, so all conditional write probabilities are bounded away from one.

\begin{corollary}[Finite-cap strict physical gap]
\label{cor:direct}
Let $C_{T,\Lambda}$ and $C_{I,\Lambda}$ be the actual protocol infima under the task and information constraints, respectively, with the same $\tau,\B,\Lambda$. Put $c_I=(e^{q_I/\B}-1)^{-1}$. If the rate conditions of Theorem~\ref{thm:regularization} hold for both $(0,2q_I)$ and $(0,2q_T)$, and
\begin{equation}
 \Lambda>\frac{\B q_I\ln(1+c_I)}{\tau(q_T^2-q_I^2)},
 \label{eq:strict-cap}
\end{equation}
then both task and information problems are feasible and
\begin{equation}
 C_{T,\Lambda}-C_{I,\Lambda}
 \geq\frac{q_T^2-q_I^2}{\B}
       -\frac{q_I\ln(1+c_I)}{\tau\Lambda}>0.
 \label{eq:cap-gap}
\end{equation}
All rate-cap thresholds in this corollary are finite for $0<\epsilon<1/2$.
\end{corollary}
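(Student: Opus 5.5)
The argument sandwiches the two protocol infima between an achievable information-feasible cost and a universal task-feasible lower bound. Feasibility comes first. The endpoint $(k_0,k_1)=(0,2q_I)$ has $q=q_I>0$ and $\kappa=2q_I<1$. The endpoint $(0,2q_T)=(0,1-2\epsilon)$ has $q=q_T>0$ and $\kappa=1-2\epsilon<1$. Under the assumed rate conditions, Theorem~\ref{thm:regularization} gives, for each endpoint, an admissible protocol from exact reset with all rates bounded by $\Lambda$ and $\mathcal M\leq\B$, reaching that endpoint exactly at $\tau$. By Theorem~\ref{thm:binary}, the first endpoint satisfies $I(Y;M_\tau)=R(\epsilon)$, and the second has Bayes error exactly $\epsilon$. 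Hence both problems are feasible, so $C_{T,\Lambda},C_{I,\Lambda}<\infty$. Because the second protocol has finite cost, the task infimum is a genuine finite number rather than $+\infty$, so the difference in \eqref{eq:cap-gap} is meaningful.

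\textbf{Upper bound on $C_{I,\Lambda}$.} The information-feasible protocol for $(0,2q_I)$ is admissible for $C_{I,\Lambda}(R(\epsilon))$, and \eqref{eq:regularized-cost} gives its cost. Therefore
\begin{equation}
 C_{I,\Lambda}\leq\frac{q_I^2}{\B}+\frac{q_I}{\tau\Lambda}\ln(1+c_I).
\end{equation}

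\textbf{Lower bound on $C_{T,\Lambda}$.} Take any admissible protocol $\Pi$ with $e(K_\Pi)\leq\epsilon$ and $\mathcal M[\Pi]\leq\B$. Its terminal encoder is a pair $(a,b)$ with $\W(p_0,p_{K_\Pi})=q(K_\Pi)\geq W^*_{\Task}(\epsilon)=q_T$, by Theorem~\ref{thm:binary}. The transport bound \eqref{eq:lower} then gives $\Sigma[\Pi]\geq q(K_\Pi)^2/\B\geq q_T^2/\B$. Taking the infimum over $\Pi$ yields $C_{T,\Lambda}\geq q_T^2/\B$.

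Subtracting the upper bound from the lower bound gives the middle expression in \eqref{eq:cap-gap}. Condition \eqref{eq:strict-cap} rearranges exactly to $(q_T^2-q_I^2)/\B>q_I\ln(1+c_I)/(\tau\Lambda)$, which is the strict positivity. Finiteness of all thresholds follows because $0<q_I<q_T<1/2$ by Theorem~\ref{thm:binary}. This makes $c_I$, $c_T$ finite and positive, keeps the denominators $1-\kappa$ positive, and makes $q_T^2-q_I^2>0$.

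The main point needing care is the lower bound's requirement that \emph{every} task-feasible protocol have transport $q\geq q_T$. The infimum is taken over protocols, not endpoints, so I would argue that \eqref{eq:lower} holds for arbitrary admissible protocols. This includes protocols with inactive edges and boundary divergences handled through the improper integral. It also includes endpoints with $q>1/2$, whose transport exceeds $q_T$ anyway, because the reset label is fixed and $\W=q$. Beyond this, the corollary is a direct composition of Theorems~\ref{thm:binary} and~\ref{thm:regularization} with \eqref{eq:lower}.
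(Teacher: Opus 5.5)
Your proof is correct and follows the paper's argument. The paper likewise takes the protocol of Theorem~\ref{thm:regularization} at $(0,2q_I)$ as an upper bound on $C_{I,\Lambda}$, combines $q\geq q_T$ for every task-feasible endpoint with \eqref{eq:lower} to bound $C_{T,\Lambda}$ from below, and uses the second regularized protocol at $(0,2q_T)$ to certify task feasibility. You have only spelled out details the paper leaves implicit: the rearrangement of \eqref{eq:strict-cap}, the finiteness of $c_I$, $c_T$ and $1/(1-\kappa)$, and the coverage of endpoints with $q>1/2$.
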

\begin{proof}
Every task endpoint has $q\geq q_T$, giving $C_{T,\Lambda}\geq q_T^2/\B$. Theorem~\ref{thm:regularization} supplies an actual information-feasible protocol with cost at most the right-hand side of \eqref{eq:regularized-cost} at $q_I$, as well as a finite-cost task-feasible protocol at $q_T$. Subtract the information upper bound from the task lower bound. This compares achievable costs with a universal lower bound, not two lower bounds.
\end{proof}
For $\epsilon=0.25$, $\tau=\B=1$, and $\Lambda=30$, one obtains
\begin{equation}
 q_I\simeq0.165035659,\quad
 C_{I,\Lambda}\leq0.037595380,
 \quad C_{T,\Lambda}\geq0.0625.
\end{equation}
The certified gap in total entropy production is at least $0.024904620$ in units of $k_{\rm B}$. The information witness uses integrated mobility of approximately $0.913748547$, within the common unit budget.

\section{A strict physical gap with noisy observations}\label{sec:combined}
The statistical optimum may assign $k_d=1$ on some observations. Such endpoints cannot be reached exactly from reset with a finite rate cap in finite time. It would therefore be invalid simply to substitute both statistical optima into an assumed physical equality. The next theorem avoids that error by using an information-feasible witness whose write probabilities are bounded away from one.

\begin{theorem}[Strict physical gap with noisy observations and finite rates]
\label{thm:combined}
Fix an informative finite symmetric experiment, $r<\epsilon<1/2$, $\tau>0$, and $\B>0$. Use the physical model of Section~\ref{sec:control}, with rates controlled only by observed $D$. There exists a finite $\Lambda_0$ such that for every $\Lambda>\Lambda_0$, both optimization problems are feasible and
\begin{equation}
 C_{T,\Lambda}(\epsilon)>C_{I,\Lambda}(R_D^Y(\epsilon)).
\end{equation}
More explicitly, choose any encoder with cost $0<q<q_T$, information at least $R_D^Y(\epsilon)$, and $\max_d k_d=\kappa<1$. Put $c=(e^{q/\B}-1)^{-1}$. Whenever its rate conditions hold and a task-feasible encoder with all write probabilities below one is also realizable under the same cap,
\begin{equation}
 C_{T,\Lambda}-C_{I,\Lambda}
 \geq\frac{q_T^2-q^2}{\B}
       -\frac{q\ln(1+c)}{\tau\Lambda}.
 \label{eq:combined-certificate}
\end{equation}
The right-hand side is positive for a sufficiently large finite cap.
\end{theorem}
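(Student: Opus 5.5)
The argument combines Theorem~\ref{thm:stat-gap} with Theorem~\ref{thm:regularization}, in the same way as Corollary~\ref{cor:direct}. The one new ingredient is constructing information-feasible and task-feasible witnesses whose write probabilities stay below one. The lower bound is immediate. Every task-feasible protocol has terminal encoder $K_\Pi\in\mathcal F_\epsilon$, so Theorem~\ref{thm:frontier} gives $q(K_\Pi)\geq q_T(\epsilon)$. The transport bound \eqref{eq:lower} then gives $C_{T,\Lambda}(\epsilon)\geq q_T^2/\B$ for every cap $\Lambda$.

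\textbf{Information witness.} Theorem~\ref{thm:stat-gap} gives $q_I<q_T$, and its proof shows $\Psi(q_T)>R_D^Y(\epsilon)$. The threshold kernel at $q_I$ may set $k_d=1$ on high-score observations, so I would not use it directly. Instead I would pick $q\in(q_I,q_T)$ with $\Psi(q)>R_D^Y(\epsilon)$ strictly; such a $q$ exists by continuity of $\Psi$ near $q_T$. I would then perturb the threshold kernel at $q$ to $k^\eta_d=(1-\eta)k_d+\eta q$. This keeps the cost equal to $q$, makes $\max_dk^\eta_d\leq1-\eta(1-q)<1$, and moves the contrast $A$ continuously. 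By continuity of $\MI$, for small $\eta>0$ the information still exceeds $R_D^Y(\epsilon)$. This yields an encoder with $0<q<q_T$, $\kappa<1$, and $I\geq R_D^Y$, as the explicit clause of the theorem requires.

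\textbf{Task witness.} I need a task-feasible encoder with all $k_d<1$, and contrast at least $\delta$, for feasibility of the task problem. Since $\delta<S$, the kernel $k_d=\mathbf 1\{s_d>0\}$ has contrast $S>\delta$. Scaling it to $(1-\eta')\mathbf 1\{s_d>0\}$ gives contrast $(1-\eta')S\geq\delta$ for small $\eta'$, cost $(1-\eta')p_+>0$, and $\kappa<1$. Its error is at most $\epsilon$, so it lies in $\mathcal F_\epsilon$.

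\textbf{Choice of $\Lambda_0$ and the gap.} Applying Theorem~\ref{thm:regularization} to both encoders, I would take $\Lambda_0$ to be the maximum of three quantities: the two pairs of thresholds in \eqref{eq:rate-threshold}, which are finite because each $\kappa<1$ and each $c<\infty$ since $q>0$; and $\B q\ln(1+c)/(\tau(q_T^2-q^2))$, which is finite since $q<q_T$. For $\Lambda>\Lambda_0$ both problems are feasible with finite cost and $\mathcal M\leq\B$. Moreover, $C_{I,\Lambda}\leq q^2/\B+q\ln(1+c)/(\tau\Lambda)$. Subtracting this from the task lower bound gives \eqref{eq:combined-certificate}, and the right-hand side is positive by the choice of $\Lambda_0$.

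The main obstacle is the information witness. The statistical optimum may sit on the boundary $k_d=1$, which is unreachable under a finite cap, and it meets the information threshold only with equality. A strict margin is therefore needed to absorb the perturbation away from one. That margin comes from choosing $q$ strictly between $q_I$ and $q_T$ using the strict Fano step of Theorem~\ref{thm:stat-gap}. Everything else is bookkeeping of finite thresholds.
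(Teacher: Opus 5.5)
Your proof is correct and follows the paper's structure: the task lower bound $q_T^2/\B$ comes from the frontier theorem together with the transport bound, the information witness is obtained from the strict-Fano margin $\Psi(q_T)>R_D^Y(\epsilon)$ by a continuity perturbation, the task witness is a scaled Bayes rule, and Theorem~\ref{thm:regularization} is applied to both witnesses. The only difference is cosmetic: the paper builds the information witness in one step as $\alpha k^T$ with $\alpha<1$ close to one, which lowers the cost and caps the write probabilities at the same time, whereas you first fix some $q<q_T$ and then mix with the constant kernel $q$ to hold the cost fixed; both constructions work.
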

\begin{proof}
Theorem~\ref{thm:stat-gap} proves that the task-optimal threshold encoder $k^T$ has information strictly above $R_D^Y(\epsilon)$. Replace it by $k_d=\alpha k_d^T$ for $0<\alpha<1$ sufficiently close to one. Continuity preserves information strictly above the threshold, while $q=\alpha q_T<q_T$ and $\max k_d\leq\alpha<1$. Thus the information witness required by the theorem always exists.

To ensure task feasibility with finite rates, put $\delta=1/2-\epsilon$ and let $k^B$ be a Bayes rule with contrast $S$, randomizing ties symmetrically. Choose $\beta\in(\delta/S,1)$ and set $k_d=\beta k_d^B$. This has contrast $\beta S>\delta$, hence error strictly below $\epsilon$, and all encoder probabilities are below one. Theorem~\ref{thm:regularization} realizes both witnesses at sufficiently large finite caps.

Every task-feasible protocol has $q\geq q_T$, and hence entropy production at least $q_T^2/\B$. The explicit information witness has entropy production given by \eqref{eq:regularized-cost}. Subtracting gives \eqref{eq:combined-certificate}. All rate and strictness thresholds are finite since both witnesses have maximum write probability below one and $q<q_T$. This also establishes a common $\Lambda_0$ valid for every larger cap.
\end{proof}

\subsection{A concrete finite-cap, finite-sample certificate}
Take $n=3$, $\nu=0.2$, and $\epsilon=0.2$. The count statistic, mixture probabilities, posterior scores, and information-feasible encoder are
\begin{center}
\begin{tabular}{@{}ccccc@{}}\toprule
Count $j$ & 3 & 2 & 1 & 0\\\midrule
$w_j$ & $0.26$ & $0.24$ & $0.24$ & $0.26$\\
$s_j$ & $63/65$ & $0.6$ & $-0.6$ & $-63/65$\\
Information witness $k_j$ & $0.92$ & $0.23$ & $0$ & $0$\\\bottomrule
\end{tabular}
\end{center}
Here $r_3=0.104$, $q_T=0.34$, and the information witness has
\begin{align}
 q&=0.2944,& A&=0.26496,\\
 e&=0.23504,& I(Y;M)&\simeq0.196607776>R(0.2)\simeq0.192744757.
\end{align}
It fails the task, as intended for a relaxation witness, but satisfies the correct target-information threshold. For $\tau=\B=1$, $\Lambda=50$, the rate conditions hold: the forward-rate bound is approximately $45.094227<50$. The explicit protocol gives the following bounds:
\begin{equation}
 C_{I,50}\leq0.094716766,
 \qquad C_{T,50}\geq0.34^2=0.1156.
\end{equation}
A task-feasible encoder with maximum write probability below one is $k_j=29/33$ on $j=2,3$ and zero otherwise. It has error $0.152<0.2$, and its rate conditions also hold at $\Lambda=50$. Therefore both problems have finite feasible costs and
\begin{equation}
 \boxed{C_{T,50}-C_{I,50}\geq0.020883234>0.}
\end{equation}
The certificate does not require knowing either finite-cap optimum exactly.

\subsection{A statistical--kinetic feasibility obstruction}
\begin{proposition}[Bayes-floor obstruction at finite rate]
For every protocol with forward rate bounded by $\Lambda$ and duration $\tau$,
\begin{equation}
 e\geq r+e^{-\Lambda\tau}(1/2-r).
 \label{eq:feasibility}
\end{equation}
This lower bound remains valid with an additional mobility budget. In particular, an informative experiment cannot attain its exact data Bayes error in this reset-memory model at any finite $\Lambda\tau$.
\end{proposition}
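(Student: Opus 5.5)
The plan has two steps. First, bound each conditional write probability $k_d=x_d(\tau)$ away from one using only the forward-rate cap. Then feed that bound into the exact cost--risk boundary of Theorem~\ref{thm:frontier}.

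For the first step, fix a sector $d$ with $x_d(0)=0$. The master equation gives
\[
\dot x_d=(1-x_d)r_d^+-x_dr_d^-\leq(1-x_d)\Lambda,
\]
since $r_d^-\geq0$ and $r_d^+\leq\Lambda$. Put $y=1-x_d$, so that $\dot y\geq-\Lambda y$ and $y(0)=1$. Gr\"onwall's inequality, or comparison with the pure-gain solution, then yields $y(\tau)\geq e^{-\Lambda\tau}$. Hence
\[
k_d\leq\kappa_\Lambda:=1-e^{-\Lambda\tau}.
\]
This uses only piecewise continuity of the rates, so the master equation holds in the Carath\'eodory sense. Because the argument never touches the mobility constraint, adding $\mathcal M\leq\B$ can only shrink the admissible class, and the bound survives.

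For the second step, I will show that for any kernel with $0\leq k_d\leq\kappa$,
\[
|A|=\Bigl|\sum_dw_ds_dk_d\Bigr|\leq\kappa\sum_dw_d(s_d)_+=\kappa S.
\]
Maximizing $A$ over the box $[0,\kappa]^m$ with no constraint on $q$ is a linear problem, solved by setting $k_d=\kappa$ exactly on positive scores. Symmetry under $T$, or the same argument applied to $-A$, handles the negative side. The error formula \eqref{eq:IA} then gives
\[
e=\tfrac12-|A|\geq\tfrac12-\kappa_\Lambda S.
\]
Substituting $S=\tfrac12-r$ and $\kappa_\Lambda=1-e^{-\Lambda\tau}$,
\[
e\geq\tfrac12-(1-e^{-\Lambda\tau})(\tfrac12-r)=r+e^{-\Lambda\tau}(\tfrac12-r),
\]
which is \eqref{eq:feasibility}. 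For an informative experiment, $S>0$, so the bound exceeds $r$ strictly at every finite $\Lambda\tau$, and the data Bayes error cannot be attained exactly.

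The main obstacle I expect is rigor in the comparison step. The rates may be switched off or discontinuous, and near $t=0$ the entropy-production density can diverge. Only the rates enter the differential inequality, though, and they are bounded. So I will work with the integrated form
\[
y(t)=1-\int_0^t\bigl[(1-x)r^+-xr^-\bigr]\,\dd s,
\]
which is absolutely continuous. I will then apply Gr\"onwall to $e^{\Lambda t}y(t)$, which is nondecreasing, and this avoids any appeal to smoothness.
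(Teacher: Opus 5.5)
Your proposal is correct and follows the paper's proof essentially step for step. The comparison $\dot x_d\leq\Lambda(1-x_d)$ from reset gives $k_d\leq 1-e^{-\Lambda\tau}$; symmetry then bounds $|A|$ by $(1-e^{-\Lambda\tau})S$; and the error formula $e=\tfrac12-|A|$ with $S=\tfrac12-r$ yields the claim, with your integrated-form Gr\"onwall argument (monotonicity of $e^{\Lambda t}(1-x_d(t))$) adding rigor the paper leaves implicit and your direct box bound making the appeal to Theorem~\ref{thm:frontier} in your outline unnecessary.
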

\begin{proof}
The conditional probability satisfies $\dot x_d\leq\Lambda(1-x_d)$ and $x_d(0)=0$, so $k_d\leq1-e^{-\Lambda\tau}=:\gamma$. By symmetry,
\begin{equation}
 |A|=\left|\sum_dw_ds_dk_d\right|\leq\gamma\sum_dw_d(s_d)_+=\gamma S.
\end{equation}
Thus $e=1/2-|A|\geq1/2-\gamma S=r+e^{-\Lambda\tau}(1/2-r)$.
\end{proof}
No attainability assertion is made for \eqref{eq:feasibility} under finite entropy production or a mobility budget. It is a necessary condition. A necessary duration-rate condition for target $\epsilon>r$ is
\begin{equation}
 \Lambda\tau\geq\ln\frac{1/2-r}{\epsilon-r}.
\end{equation}
This explains why the combined theorem requires an interior error target and why the statistical floor examples must not be interpreted as finite-rate physical implementations.

\section{Numerical verification}
The supplementary reproduction program \texttt{verify\_extensions.py} checks the score-sorting frontier against independent linear programs for $n=1,\ldots,15$, noise levels $0.05,0.2,0.4,0.49$, and 19 cost values per experiment (1,140 checks). It also compares random encoders with the information frontier and checks the strict gap at interior targets. Direct quadrature verifies the regularization cost and mobility, and the master equation is integrated using the explicit rates to check the endpoint probabilities. An additional 200 linear-program checks cover 40 randomly generated finite symmetric experiments, including experiments with posterior ties. Numerical integration of the master equation checks both three-sample witness protocols against a maximum sampled path-error tolerance of $10^{-8}$; the information witness uses integrated mobility $0.974250814$. The observed path discrepancy was below $2\times10^{-15}$ for these prescribed linear trajectories in the recorded run; this is not a general solver accuracy guarantee. Numerical results support the calculations; the theorems rest on the analytic proofs above.

\paragraph{Reproducibility.}
The supplied program \texttt{verify\_extensions.py} generates the numerical results and the two verification figures; \texttt{create\_overview.py} generates the introduction schematic. NumPy generators use seeds 729 and 219 for the binomial and general symmetric checks, respectively. Linear programs use SciPy's HiGHS backend, with objective agreement thresholds $10^{-7}$ and $10^{-9}$ for the two suites. Information threshold roots use Brent's method. Master equations use RK45 with relative tolerance $10^{-10}$, absolute tolerance $10^{-12}$, and maximum step $0.002$ in the unit-duration examples; path discrepancies are evaluated at 1,001 equally spaced times. Entropy-production and mobility integrals use adaptive quadrature with requested absolute tolerance $10^{-12}$. The result file \texttt{extension\_verification.json} records the software versions, seeds, and solver settings. All observations are specified analytically or generated synthetically; no empirical dataset is used. The accompanying source package contains the programs, numerical output, and instructions for reproduction.

% Figure~\ref{fig:frontiers} visualizes the statistical frontiers and the finite-cap certificate. Figure~\ref{fig:protocols} shows the numerical checks of the explicit physical witnesses. These are numerical illustrations and verification of the stated model, not measurements from an experimental device.

\begin{figure}[p]
\centering
\includegraphics[width=\linewidth]{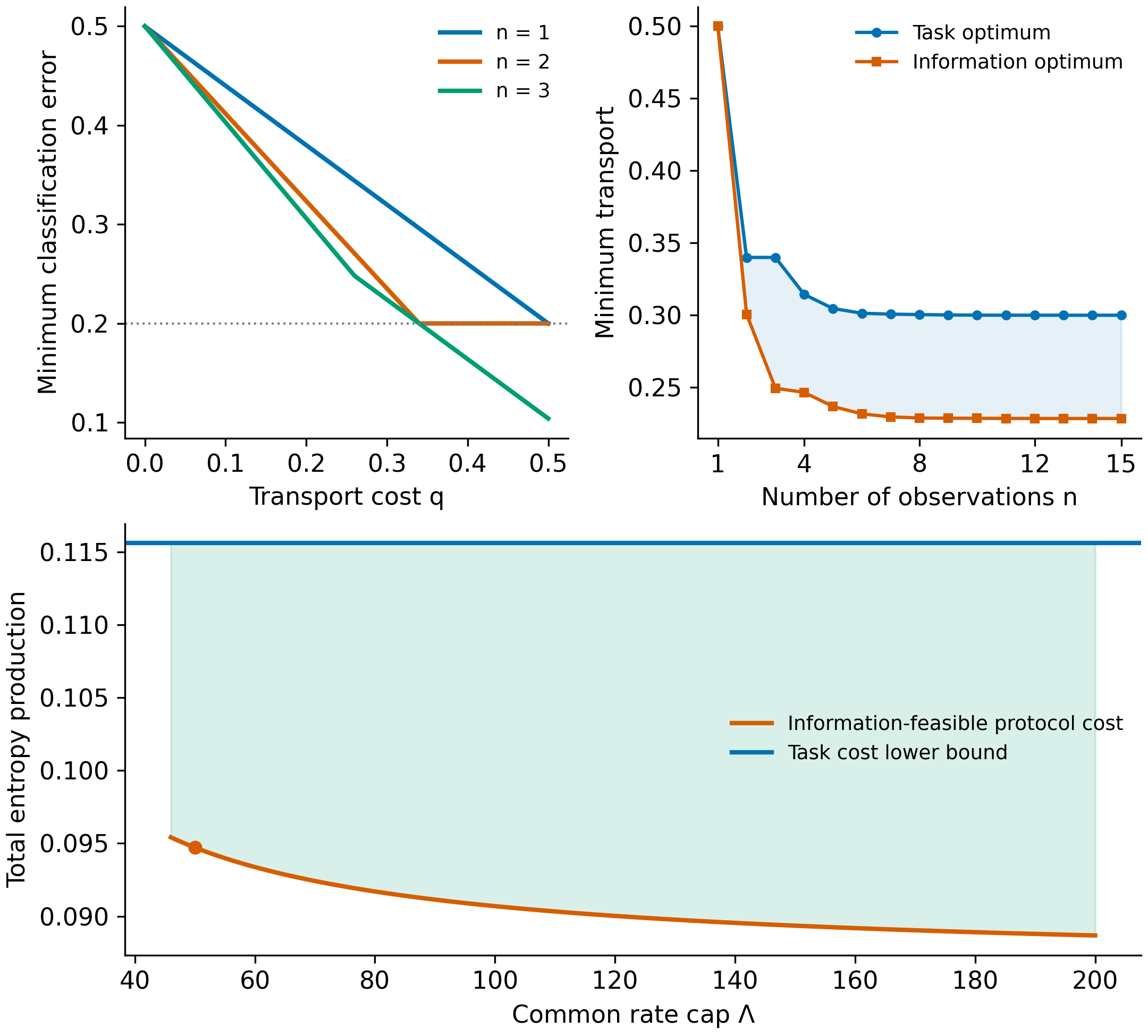}
\caption{Conditionally independent binary observations with crossover probability $0.2$. Top left: the exact minimum classification error at fixed transport cost for one, two, and three observations; the dotted line marks error $0.2$. Top right: task and information transport optima at target error $0.2$ as sample size varies. For one and two observations, the target equals the statistical Bayes floor; the task-optimal endpoints are therefore not attainable by finite-rate protocols. This obstruction does not automatically apply to the information-only endpoints. Bottom: for three observations and $\tau=\B=1$, the entropy production of the information-feasible encoder $(k_3,k_2,k_1,k_0)=(0.92,0.23,0,0)$ under the explicit protocol, versus the universal task-cost lower bound $0.1156$. The orange curve is an achievable upper bound, not an asserted optimum. Shading marks the certified separation, and the dot marks $\Lambda=50$.}
\label{fig:frontiers}
\end{figure}

\begin{figure}[p]
\centering
\includegraphics[width=\linewidth]{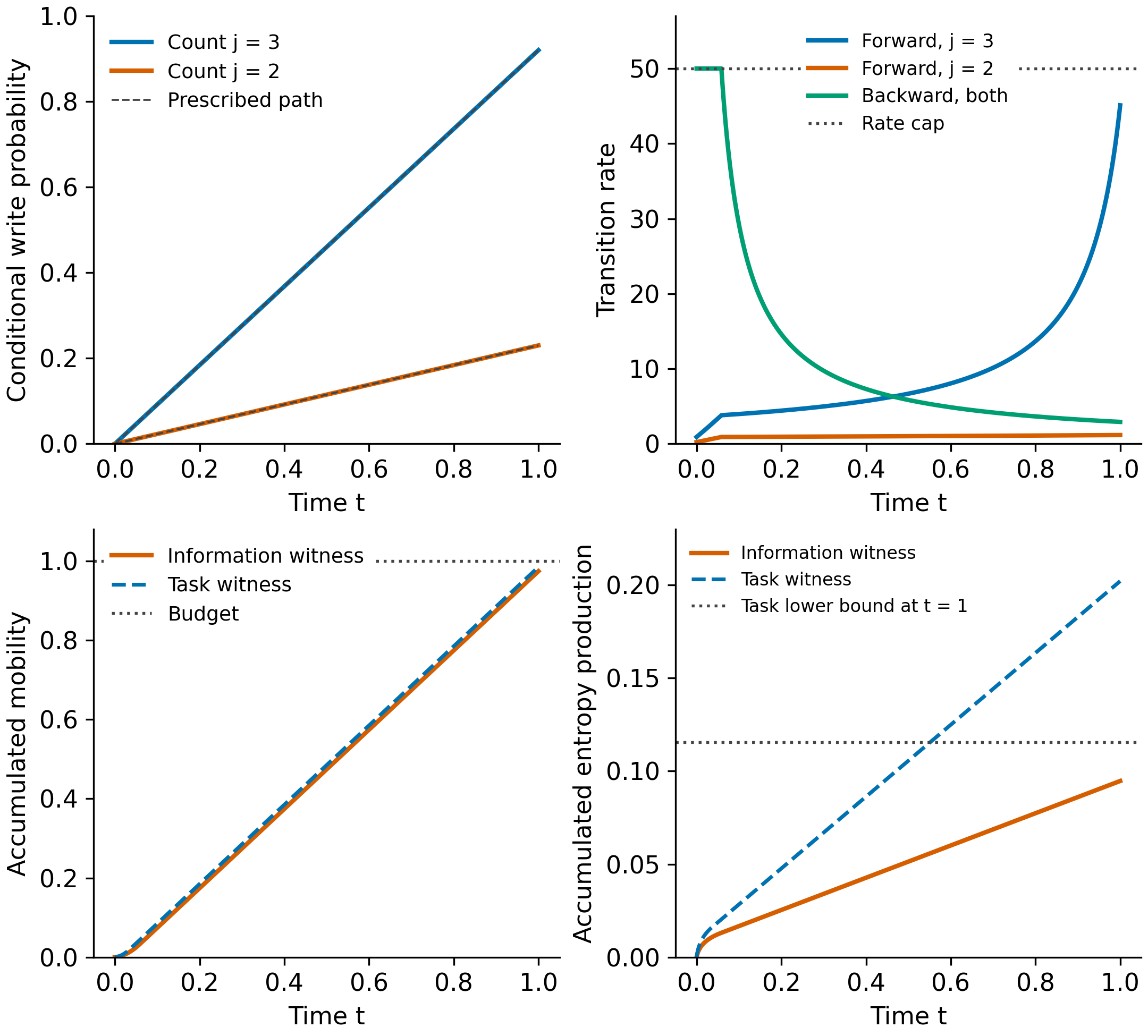}
\caption{Verification of the three-observation protocols with crossover probability $0.2$, $\tau=\B=1$, and common rate cap $\Lambda=50$. Top left: numerically integrated conditional write probabilities for the information witness, with prescribed linear paths overlaid. Top right: its forward and backward transition rates; inactive sectors remain off. Bottom left: accumulated dynamical mobility for the information witness and the task-feasible witness $(k_3,k_2,k_1,k_0)=(29/33,29/33,0,0)$, both within budget. Bottom right: accumulated total entropy production. The horizontal reference is the task lower bound at the terminal time $t=1$, not a bound at intermediate times. The task witness demonstrates feasibility and is not claimed to be optimal.}
\label{fig:protocols}
\end{figure}

% Flush the two verification figures before starting the discussion.
\clearpage
\section{Discussion}
\subsection{Why the gap occurs}
The information threshold is a necessary condition for task performance. At the task-optimal transport endpoint, the binary Fano inequality is strict whenever the output distribution is unbalanced and the error lies in $(0,1/2)$. This creates room to reduce transport while retaining the threshold amount of information. The finite-observation theorem identifies the relevant unbalanced endpoint through posterior-score ordering. Its proof uses familiar information inequalities and a threshold-selection argument; its content is the exact comparison of the two optimization problems.

Posterior confidence matters beyond the Bayes error. The one-versus-two-sample example holds the Bayes error fixed while changing the cost frontier, and the three-sample certificate shows that this statistical distinction survives a common finite kinetic constraint. The underlying task remains binary inference rather than general supervised prediction with arbitrary test inputs and hypothesis classes.

\subsection{What finite-cap realizability adds}
A strict difference between transport lower bounds does not itself establish a difference between optimal physical costs. Theorem~\ref{thm:regularization} supplies a finite-rate upper bound for an actual information-feasible protocol. Comparing it with the task lower bound proves Theorem~\ref{thm:combined}. The physical conclusion no longer assumes equality in an ideal-control variational relation at a pure reset boundary.

Some statistically optimal encoders require conditional write probability one and cannot be reached in finite time under a finite rate cap. The proof therefore constructs witnesses bounded away from unit write probability rather than treating all statistical endpoints as physically attainable. The feasibility obstruction further shows that the exact data Bayes floor is inaccessible at finite $\Lambda\tau$ in this reset-memory model. At a fixed integrated mobility budget, the limiting term $q^2/\B$ has no separate duration dependence; a convention $\B=\tau\bar m$ instead produces a $1/\tau$ term. Rate feasibility must be checked under either convention.

\subsection{Entropy production, heat, and work}
The optimized quantity is total entropy production of the data--memory system, not heat alone. For one bath at temperature $T$ and heat $Q_{\rm bath}$ positive into the bath,
\begin{equation}
 \Sigma=\Delta H(D,M)+\frac{Q_{\rm bath}}{k_{\rm B}T},\qquad
 \Delta H(D,M)=H(M\mid D)=\sum_dw_d\hbin(k_d).
\end{equation}
If the latent variable $Y$ is included in the frozen state, conditional independence $M\perp Y\mid D$ gives the same entropy change and entropy production. Endpoint entropy varies with the encoder, so a total-entropy-production gap need not equal a heat gap. Work also depends on endpoint energies and the adopted free-energy accounting.

The memory-side quantity $\Delta H(M)+Q_{\rm bath}/(k_{\rm B}T)$ equals total entropy production plus $I(D;M)$ in the present data-conditioned writing model. For direct observation $D=Y$, this becomes the target information, but the two mutual informations differ for noisy data. We therefore do not transfer our cost-optimality statements to memory-side dissipation without reoptimizing that different objective.

\subsection{Implications for classification and structured decision systems}
The result distinguishes an information requirement from a decision requirement: satisfying a necessary target-information threshold does not ensure the prescribed classification accuracy, and optimizing under the relaxed constraint can strictly underestimate the required writing cost. For systems that return labels or probabilities, cost--accuracy comparisons must therefore specify the decision loss and retain task feasibility explicitly.

Structured interfaces such as Jev~\cite{Almeida2026Jev} motivate this question but are outside the model. Type validity, probability calibration, and classification accuracy are distinct properties; the present theorems concern only binary error and the entropy production of writing a two-state memory. Applying the analysis to a deployed classifier would require its physical transition structure, admissible controls, and a cost boundary that accounts for computation, data movement, and repeated operation. The results establish neither a calibration guarantee nor an energy advantage for any existing AI system.

\section{Limitations}
The statistical theorem assumes a uniform binary target, a finite symmetric observation experiment, a binary memory, and Hamming loss. It does not establish the same strictness for arbitrary priors, losses, observation models, or memory graphs. Finite-sample results concern repeated noisy observations of a binary latent target, not general learning sample complexity or generalization error. The boundary at the data Bayes floor must be treated separately, and the gap need not grow monotonically with sample size.

The physical model permits arbitrary data-conditioned, time-dependent rates below a common cap and allows edges to be switched off in both directions. It does not impose a common positive lower rate, fixed hardware energy range, restricted control access, or a cost for the controller. Pure reset is supplied as an initial resource. Although our protocols have finite rates and finite energy differences on each active edge, no independent energy or barrier budget is enforced across models.

The finite-cap statements provide constructive upper bounds and strict-gap certificates, not exact finite-cap cost optima. The cap may need to be large, especially near feasibility boundaries. The model excludes data acquisition and processing costs outside memory writing, as well as retention and subsequent erasure. Thus the results do not determine a complete inference-cycle heat or work budget.

Numerical checks support the formulas and constructed dynamics in the specified examples; they are not experimental evidence. The comparisons with prior work delimit the contribution to the stated inference constraints and control model; they do not establish a general optimal-control principle.

\section{Conclusion}
For finite symmetric binary experiments, posterior-score ordering yields exact transport--risk and transport--information boundaries. Every error target strictly above the data Bayes floor and below chance exhibits a strict gap between task-constrained transport and its target-information relaxation. An explicit finite-rate protocol turns this statistical gap into a strict difference between actual entropy-production infima under a common duration, mobility budget, and sufficiently large finite rate cap. The resulting limitation of information-only cost prediction survives finite noisy observations and does not require singular control rates. Its scope is the writing stage of binary inference, where task feasibility, posterior confidence, and kinetic constraints must be retained together.

\section*{Data and code availability}
The study uses analytically specified and synthetic observations; no proprietary or empirical dataset is required. The accompanying source package includes the verification and figure-generation programs, numerical results, and dependency versions. The author makes these materials available for public distribution with the manuscript.

\section*{Ethics statement}
This theoretical and computational study involved no human participants, animal experiments, or personal data.

\appendix
\section{Complete proof of the direct-observation binary theorem}
\label{app:binary-proof}

\begin{proof}
\textit{Step 1: the information threshold is necessary and achievable.}
For a Bayes decoder $\widehat Y=g(M)$, define the error indicator $E=\mathbf 1_{\{Y\neq\widehat Y\}}$. Since $Y$ is binary, $Y$ is determined by $(M,E)$, and $E$ is determined by $(M,Y)$. Consequently
\begin{equation}
 H(Y\mid M)=H(E\mid M)\leq H(E)=\hbin(e(K)).
\end{equation}
As $e(K)\leq1/2$ and $\hbin$ increases on $[0,1/2]$, task feasibility implies $I(Y;M)\geq\ln2-\hbin(\epsilon)$. Conversely, the binary symmetric channel $(a,b)=(\epsilon,1-\epsilon)$ has Bayes error $\epsilon$ and mutual information $\ln2-\hbin(\epsilon)$. Thus \eqref{eq:binary-rd} is exactly the minimum information among task-feasible binary encoders. This is the binary rate--distortion calculation; see also~\cite{PolyanskiyWu2022Notes}.

\textit{Step 2: solve the task-constrained transport problem.}
Equation~\eqref{eq:bayes} gives $|b-a|\geq1-2\epsilon$. Since $a,b\geq0$,
\begin{equation}
 q=\frac{a+b}{2}\geq\frac{|b-a|}{2}\geq\tfrac12-\epsilon.
\end{equation}
The encoder $(a,b)=(0,1-2\epsilon)$ attains both inequalities and has error exactly $\epsilon$. Therefore $W^*_{\Task}(\epsilon)=1/2-\epsilon$.

\textit{Step 3: maximize information at fixed transport.}
For $q\in[0,1/2]$, the constraint $a+b=2q$ gives $a\in[0,2q]$ and $b=2q-a$. The function
\begin{equation}
 f_q(a)=\hbin(a)+\hbin(2q-a)
\end{equation}
is concave on that interval; hence its minimum is achieved at an endpoint. Both endpoints have value $\hbin(2q)$. Substituting into \eqref{eq:mi} proves
\begin{equation}
 I(K)\leq\phi(q),
\end{equation}
with equality at $(a,b)=(0,2q)$ or $(2q,0)$. Moreover,
\begin{equation}
 \phi(0)=0,\quad \phi(1/2)=\ln2,\quad
 \phi'(q)=\ln\frac{2(1-q)}{1-2q}>0
 \quad(0<q<1/2).
 \label{eq:phi-derivative}
\end{equation}
Continuity and strict monotonicity establish existence and uniqueness of $q_\epsilon$.

If an encoder has $q<q_\epsilon$, then $q<1/2$ and $I(K)\leq\phi(q)<R(\epsilon)$, so it is infeasible for the information benchmark. Encoders with $q>1/2$ cannot improve on the candidate $q_\epsilon<1/2$. The endpoint encoder $(0,2q_\epsilon)$ attains the threshold and proves $W^*_{\Info}(R(\epsilon))=q_\epsilon$. This also justifies the global optimization over all $a,b\in[0,1]$, rather than only those with $q\leq1/2$.

\textit{Step 4: establish strictness.}
Define the continuous function on $[0,1/2]$
\begin{equation}
 G(\epsilon)=\phi(1/2-\epsilon)-\ln2+\hbin(\epsilon).
\end{equation}
Its endpoint values are $G(0)=G(1/2)=0$, while for $0<\epsilon<1/2$,
\begin{align}
 G'(\epsilon)
 &=-\ln\frac{1+2\epsilon}{2\epsilon}
   +\ln\frac{1-\epsilon}{\epsilon}\\
 &=\ln\frac{2(1-\epsilon)}{1+2\epsilon}.
\end{align}
This derivative is positive on $(0,1/4)$, zero at $1/4$, and negative on $(1/4,1/2)$. Hence $G(\epsilon)>0$ throughout the open interval. Therefore
\begin{equation}
 \phi(1/2-\epsilon)>R(\epsilon)=\phi(q_\epsilon).
\end{equation}
Strict monotonicity of $\phi$ yields $q_\epsilon<1/2-\epsilon$. Finally, \eqref{eq:bayes} gives $e(0,2q_\epsilon)=1/2-q_\epsilon>\epsilon$.
\end{proof}
\end{document}